\newtheorem{definition}{{Definition}}
\newtheorem{theorem}[definition]{{Theorem}}
\newcommand{\ag}{\textit{Ag}}
\newcommand{\nonces}{\mathscr{N}}
\newcommand{\keys}{\mathscr{K}}
\newcommand{\priv}{\textit{sk}}
\newcommand{\inv}{\textit{inv}}
\newcommand{\basic}{\mathscr{B}}
\newcommand{\asserts}{\mathscr{A}}
\newcommand{\disj}{\lor}
\newcommand{\conj}{\land}
\newcommand{\says}{\textit{says}}
\newcommand{\pipe}{\ \mid\ }
\newcommand{\rnax}{\textit{ax}}
\newcommand{\rncontr}{\bot}
\newcommand{\rnandintro}{\wedge{i}}
\newcommand{\rnandelim}{\wedge{e}}
\newcommand{\rnorintro}{\vee{i}}
\newcommand{\rnorelim}{\vee{e}}
\newcommand{\rnexistsintro}{\exists{i}}
\newcommand{\rnexistselim}{\exists{e}}
\newcommand{\rnpair}{\textit{pair}}
\newcommand{\rnsplit}{\textit{split}}
\newcommand{\rnenc}{\textit{enc}}
\newcommand{\rndec}{\textit{dec}}
\newcommand{\rnsays}[1]{\textit{says}_{#1}}
\newcommand{\derives}{\vdash}
\newcommand{\DY}{\textit{dy}}
\newcommand{\DYderives}{\vdash_{\DY}}
\newcommand{\prot}{\textit{Pr}}
\newcommand{\ks}{\textit{ks}}
\newcommand{\setofterms}{\mathscr{T}}
\newcommand{\zkp}{\text{ZK}}
\newcommand{\blind}{\text{blind}}
\newcommand{\commit}{\text{commit}}
\newcommand{\sent}{\textit{sent}}
\newcommand{\vars}{\mathscr{V}}
\newcommand{\sendact}[4]{+{#1}\!\!:({#2})({#3},{#4})}
\newcommand{\recact}[3]{-{#1}\!\!:({#2},{#3})}
\newcommand{\confirmact}[2]{{#1}\!\!:\textit{confirm}\ {#2}}
\newcommand{\denyact}[2]{{#1}\!\!:\textit{deny}\ {#2}}
\newcommand{\insertact}[2]{{#1}\!\!:\textit{insert}\ {#2}}
\newcommand{\agentid}{\textit{id}}
\newcommand{\hidden}[1]{#1}
\newcommand{\swp}{\textsf{swp}}
\begin{document}

\title{Existential Assertions for Voting Protocols}
\author{
R. Ramanujam\\
Institute of Mathematical Sciences\\
Chennai, India. \\
\texttt{jam@imsc.res.in}
\and 
Vaishnavi Sundararajan\\
Chennai Mathematical Institute,
Chennai, India. \\
\texttt{vaishnavi@cmi.ac.in}
\and S.P. Suresh \\
Chennai Mathematical Institute,
Chennai, India. \\
\texttt{spsuresh@cmi.ac.in}
}

\date{}

\maketitle

\pagestyle{plain}
\thispagestyle{empty}

 \begin{abstract}
In~\cite{RSS14}, we extend the Dolev-Yao model with assertions. We build on that work and add existential abstraction to the language, which allows us to translate common constructs used in voting protocols into proof properties. We also give an equivalence-based definition of anonymity in this model, and prove anonymity for the FOO voting protocol.
\end{abstract}

\section{Anonymity}
Formal verification of security protocols often involves the analysis of a property where the relationship between an agent and a message sent by him/her needs to be kept secret. This property, called ``anonymity'', is  a version of the general unlinkability property, and one of much interest. There can be multiple examples of such anonymity requirements, including healthcare records, online shopping history, and movie ratings~\cite{NP09}. Electronic voting protocols are a prime example of a field where ensuring and verifying anonymity is crucial. 

It is interesting to see how protocols are modelled symbolically for the analysis of such properties. In the Dolev-Yao model~\cite{DY83}, one often requires special operators in order to capture certain behaviour. Many voting schemes employ an operation known as a {\em blind signature}~\cite{Ch83}. A blind signature is one where the underlying object can be hidden (via a blinding factor), the now-hidden object signed, and then the blind removed to have the signature percolate down to the underlying object. The FOO voting protocol given in~\cite{FOO92} crucially uses blind signatures in order to obtain a signature on an encrypted object. \cite{BRS10} shows that the derivability problem for protocols involving blind signatures becomes DEXPTIME-hard. Protocols which do not use blind signatures often use homomorphic encryption or mix nets, which also make the modelling and verification quite complex~\cite{LLT07}.

Note that in most common models, terms are the only objects communicated. A ``certificate'' of an agent's validity -- which is an intrinsically different object from a term containing an agent's vote, for example -- is also modelled as a term in the term algebra. \cite{BHM08,BMU08}, for example, augment the Dolev-Yao term syntax with an extra primitive $\zkp$, which can be used to create a term that codes up a zero-knowledge proof. However, no direct logical inference is possible with these proof terms, and therefore, it is difficult to reason about what further knowledge agents can obtain using them. \cite{RSS14} proposes a departure from this paradigm, using {\em assertions} as a further abstraction that can be used for modelling protocols. Assertions, which code up certificates and have a separate proof system, can be sent by agents in addition to terms. The assertion algebra allows designers to model protocols involving certification in a more explicatory manner (by maintaining terms and certificates as separate objects). It also allows analysts to capture any increase in agents' knowledge achieved by deduction at the level of certificates. 

So what are these assertions and how do they behave? Assertions include statements about various terms appearing in the protocol. These include instances of application-specific predicates and equalities between two different symbolic terms. Assertions can also be combined using the usual propositional connectives \textit{and} ($\conj$) and \textit{or} ($\disj$). They also include a $\says$ operator, which works as an ownership mechanism for assertions, and disallows other agents from forwarding such an assertion in their own name. Perhaps the most crucial (and useful) addition to the assertion language here (over the system in~\cite{RSS14}) is the existential quantifier. This allows us to quantify out any term from an assertion, thereby effectively hiding the actual term about which that assertion is made. Since existential assertions thus hide the private data used to generate a certificate, while revealing some partial information, they seem especially useful for capturing blinding (and similar operations with this goal) in voting protocols.   

\subsection{Related Work}
Research on anonymity has been carried out for many years now. In the applied-pi calculus, \cite{KR05} verifies anonymity for the FOO protocol, \cite{ACRR10} studies general unlinkability and shows that this implies anonymity, and \cite{MVdV07} provides an applied-pi based model incorporating aspects of the underlying communication mechanism (anonymous channels in particular). 

There are also many epistemic logic-based approaches. \cite{HoN05} provides a logical framework built on modal epistemic logic for anonymity in multiagent systems; \cite{GS98,SS99} also define information-hiding properties in terms of agent knowledge; \cite{HS04} provides a modular framework that allows one to analyze general unlinkability properties using function views, along with extensive case studies on anonymity and privacy.

Theorem provers have also been used to verify anonymity. \cite{BJL12} uses an automatic theorem prover MCMAS for verification; \cite{BGB13} also specifies general unlinkability as an extension to the Inductive Method for security protocol verification in the theorem prover Isabelle.

In this paper, we extract a logical core of reasoning about certificates, translate the typical constructs used for voting protocols into proof properties, and employ equivalence-based reasoning for verifying anonymity. We also apply this technique to model two voting protocols, namely FOO and Helios, and to analyze anonymity for FOO.

\section{Modelling the FOO protocol}
\label{sec:foo}
\subsection{Introduction to FOO}
In~\cite{FOO92}, the authors introduce the FOO protocol for electronic voting, which has inspired many subsequent protocols. This protocol uses blinding functions and bit commitments in order to satisfy many desirable security properties, including anonymity. The voter $V$ sends to the authority $A$ his name, along with a blindsigned commitment to the vote $v$. The authority signs this term, and sends it back to $V$. $V$ now unblinds this to obtain a signature on his commitment to the vote $v$, and sends that to the collector $C$. $C$ adds the encrypted vote and $V$'s commitment to the public bulletin board. $V$ then sends to $C$ the random bit $r$ he used to create the vote commitment, so $C$ can access the vote and update his tally. The protocol is presented in Figure~\ref{fig:footerms} (see~\cite{FOO92,KR05} for a detailed explanation). Sends marked by $\looparrowright$ are over anonymous channels.

\begin{figure}\small
\begin{center}
\begin{subfigure}{0.35\linewidth}
\begin{eqnarray*}
V \rightarrow A & : & V, [\blind(\commit(v, r), b)]_{V} \\
& & \\
& & \\
& & \\
A \rightarrow V & : & [\blind(\commit(v, r), b)]_{A} \\
& & \\
& & \\
& & \\
V \looparrowright C & : & [\commit(v, r)]_{A} \\
C \rightarrow \hphantom{V}  & : & \text{list}, [\commit(v, r)]_{A} \\
V \rightarrow C & : & r \\
& & \\
& & \\
& &
\end{eqnarray*}
\caption{FOO Protocol with terms. $[x]_{A}$ denotes $x$ signed by $A$.}
\label{fig:footerms} 
\end{subfigure}
\hspace{0.08\linewidth}
\begin{subfigure}{0.5\linewidth}
\begin{eqnarray*}
V \rightarrow A &:& \{v\}_{r_{A}}, V\ \says\ \{\ \exists x, r: \{x\}_{r} = \{v\}_{r_{A}} \\
& & \hspace{0.3\textwidth} \conj\ \texttt{valid}(x)\ \ \} \\
A &: & \textit{deny}\ \ \exists x: \texttt{voted}(V, x) \\
A &: & \textit{insert}\ \texttt{voted}(V, \{v\}_{r_{A}}) \\ 
A \rightarrow V &:& A\ \says \\
& & [\ \texttt{elg}(V) \conj \texttt{voted}(V, \{v\}_{r_{A}}) \\
                & & \hspace{0.02\textwidth} \conj\ V\ \says\ \{\ \exists x, r: \{x\}_{r} = \{v\}_{r_{A}} \\
                & & \hspace{0.3\textwidth} \conj\ \texttt{valid}(x)\ \}\ ] \\
V \looparrowright C &:& \{v\}_{r_{C}}, r_{C}, \\
& & \exists X\ \exists y,s: A\ \says \\
& & [\ \texttt{elg}(X) \conj\  \texttt{voted}(X, \{y\}_{s}) \\
& & \hspace{0.02\textwidth} \conj\ X\ \says\ \{\ \exists x, r:\{x\}_{r} = \{y\}_{s} \\
& & \hspace{0.3\textwidth} \conj\ \texttt{valid}(x)\ \}\ ] \\
& & \hspace{0.01\textwidth} \conj\ y = v \\
\end{eqnarray*}
\caption{FOO Protocol with assertions.\\}
\label{fig:fooassertions}
\end{subfigure}
\end{center}
\caption{FOO Protocol: Modelling with terms only and with assertions}
\label{fig:foo}
\end{figure}

\subsection{Modelling FOO with Assertions}
\label{subsec:fooasserts}
In Figure~\ref{fig:fooassertions}, we present the FOO protocol as modelled using assertions.

The voter $V$ contacts the authority $A$ with his vote $v$ encrypted using a random key $r_{A}$. $V$ also sends a certificate linking his name to his encrypted vote $v$. The $V\ \says$ prefix links $V$ to the certificate about $v$, and thus informs the authority that $V$ wishes to vote using the valid vote $v$, the encrypted form of which has been sent with the certificate. Note that this certificate automatically rules out replay attacks (of the kind where another agent $V'$ copies $V$'s published data off the bulletin board and replays it in her own name).

The authority $A$ checks that the voter $V$ has not voted earlier. If this check passes, $A$ adds the fact that $V$ has voted with the encrypted term $\{v\}_{r_{A}}$ to her database (so that $V$ cannot vote again in the future) via an \textit{insert} action. $A$ then issues a certificate stating that $V$ is a valid voter and wishes to vote with the encrypted term he sent $A$ earlier, and that $V$ claims that the term encrypted therein is a valid vote. The voter $V$ now anonymously sends to the counter $C$ the vote $v$ encrypted in a new random key. This is accompanied by an existential assertion, which hides the voter's identity from $C$, while still convincing $C$ that $A$ has certified $V$ and the sent vote to be valid.

We need three predicates here -- \texttt{valid}, \texttt{elg},  and \texttt{voted}. The first two are predicates for stating the validity of the vote and the eligibility of the voter, respectively. The \texttt{voted} predicate is used for linking the voter and the vote. As we shall see in Section~\ref{sec:assertheory}, we can add such protocol-specific predicates to the assertion language in order to communicate succinct certificates (for example, here we use $\texttt{valid}(v)$, instead of providing a disjunction over the finite set of valid votes for the value of $v$, which would grow longer as the set of allowable values grows larger).

\section{Modelling Helios 2.0}
\subsection{Introduction to Helios}
\cite{Ad08} introduces the voting scheme called Helios which has the desirable property of public auditability, i.e., even if Helios is fully corrupt, one can verify the integrity of an election outsourced to it. Helios provides unconditional integrity, while privacy is guaranteed if one trusts the Helios server, which doubles up as election administrator and trustee. The voter sends his vote to the Ballot Preparation System, which creates an encrypted ballot, which is then sealed and cast. The voter's identity and ballot are then posted on the public bulletin board. On closing the election, Helios removes voter names, shuffles all ballots, produces a proof of correct shuffling, and posts these on the board. After allowing some time for auditors to check the shuffling, Helios decrypts each  ballot, produces a proof of correct decryption, and posts the tally on the bulletin board. Helios crucially uses auditing by various participants in order to guarantee correctness.

\subsection{Helios 2.0}
\cite{CS11} demonstrates an attack on vote privacy in the basic Helios system in~\cite{Ad08}, where, by controlling more than half the voters, an adversary can get the compromised voters to copy a single (honest) voter's encrypted ballot off the bulletin board, and from the tally know whom that voter voted for. Note that this happens in spite of the Helios system itself being non-corrupt. In order to fix this, they introduce measures to weed out replayed ballots, and a linking mechanism between every ballot and the voter whose vote it is supposed to encrypt. They also replace the shuffling mechanism by a homomorphic encryption operation, and introduce trustees who are distinct from the election administrator. This introduces an extra assurance of vote privacy, since a corrupt administrator needs to corrupt 
some trustees in order to see a voter's unencrypted vote.

\begin{figure}[h]
\begin{center}
\begin{eqnarray*}
V \rightarrow S &:& v,\ V\ \says\ \texttt{valid}(v) \\
S \rightarrow V &:& b,\ S\ \says\ \{\exists v: b = \textit{ballot}(v) \conj V\ \says\ \texttt{valid}(v) \} \\
V \rightarrow S &:& \textit{cast} \\
S \rightarrow A &:& b,\ S\ \says\ \{\exists v: b = \textit{ballot}(v) \conj V\ \says\ \texttt{valid}(v) \} \\
A &:& \textit{deny}\ {\texttt{voted}(V)} \\
A &:& \textit{insert}\ {\texttt{voted}(V)} \\
A \rightarrow BB &:& b,\ A\ \says\ S\ \says\ \{\exists v: b = \textit{ballot}(v) \conj V\ \says\ \texttt{valid}(v) \} \\
\text{Suppose} & b_{1},\ldots, b_{k} & \text{were the ballots cast and published on the bulletin board.} \\
A \rightarrow BB &:& t,\ A\ \says\ [\exists s: t = \textit{ballot}(s) \conj \\
& & \{ \exists v_{1},\ldots, v_{k}: s = \textit{sum}(v_{1},\ldots, v_{k}) \conj \bigwedge_{i=1}^{k} b_{i} = \textit{ballot}(v_{i}) \} ]  
\end{eqnarray*}
\end{center}
\caption{Helios 2.0 Protocol with assertions}
\label{fig:heliosassertions}
\end{figure}

\subsection{Modelling Helios 2.0 with Assertions}
The voter first inputs his vote to a script which creates his ballot and sends it back to him with an assertion stating correctness. The voter can then choose to cast this vote, at which point the script submits his ballot and the assertion to the administrator. The administrator publishes the ballot and the assertion on the bulletin board. After some known deadline, the administrator homomorphically combines all ballots, and publishes the encrypted tally along with an assertion stating correctness of the tally. The trustees can then decrypt this tally, and the administrator publishes the result.%~\cite{CS11} modifies Helios 1.0 by adding homomorphic encryption instead of mix nets, by weeding out replayed ballots, and by linking ballots and voters via some unmalleable ballots. All this requires a mild modification of the terms sent across in the original protocol.

In Figure~\ref{fig:heliosassertions} we model Helios 2.0 with assertions. We do not include the final step, where the trustees decrypt the final encrypted tally and publish it onto the bulletin board. Note that this model, much like the terms-only model in~\cite{CS11}, requires us to add a homomorphic encryption operation to our term algebra. However, we can incorporate the weeding out of replayed ballots and establishing the link between ballots and voters by the use of assertions alone, instead of having to send extra terms. Note that in order for an agent $V_{2}$ to copy $V_{1}$'s vote and replay it to $A$, $V_{2}$ would need to make an assertion of the form $S\ \says\ \{ \exists v: b = \textit{ballot}(v) \conj V_{2}\ \says\ \texttt{valid}(v) \}$, which would contradict the sending in $V_{1}$'s name. Thus we can establish a link between vote and voter, while also disallowing replays. We merely need to add a homomorphic encryption operation to the term algebra, since our assertions, as of now, are not capable of capturing this operation.

\section{Assertions: Theory}
\label{sec:assertheory}

We fix the following countable sets -- a set $\vars$ of variables, a set $\ag$ of agents, a set $\nonces$ of nonces, and a set of $\keys$ of keys. We assume that every $k \in \keys$ has an inverse key, denoted $\inv(k)$. The set of basic terms $\basic$ is defined to be $\ag \cup \nonces \cup \keys$. The set of terms $\setofterms$ is given by the following syntax:
\[ t := m \pipe (t_{1}, t_{2}) \pipe \{t\}_{k}\]
where $m \in \basic \cup \vars$, and $k \in \keys \cup \vars$. A term with no variables occurring in it is called a \emph{ground term}. 

\begin{table}[h]
%\begin{center} 
\centering
\small
\setlength{\tabcolsep}{0.5em}
\begin{tabular}{|Sc|Sc|}
\hline
%\multicolumn{2}{|c|}{}\\
\multicolumn{2}{|Sc|}{
\begin{math}
\begin{prooftree}
\justifies X \cup \{t\} \vdash t \using \rnax
\end{prooftree}
\end{math}
}
\\
%\multicolumn{2}{|c|}{}\\
\hline 
\begin{math}
\begin{prooftree}
X \vdash t_{1} \quad X \vdash t_{2} 
\justifies X \vdash (t_{1}, t_{2}) \using \rnpair
\end{prooftree}
\end{math}
&
\begin{math}
\begin{prooftree}
X \vdash (t_{1}, t_{2}) 
\justifies X \vdash t_{i} \using \rnsplit
\end{prooftree}
\end{math}
\\
\hline
\begin{math}
\begin{prooftree}
X \vdash t \quad X \vdash k
\justifies X \vdash \{t\}_{k} \using \rnenc
\end{prooftree}
\end{math}
&
\begin{math}
\begin{prooftree}
X \vdash \{t\}_{k} \quad X \vdash \inv(k)
\justifies X \vdash t \using \rndec
\end{prooftree}
\end{math}
\\
\hline
\end{tabular}
\caption{The Dolev-Yao derivation system}
\label{tab:dyterms}
%\end{center} 
\end{table} 
The system of rules for deriving new ground terms from old is given in Table~\ref{tab:dyterms}. The rules are presented in terms of sequents $X \vdash t$ where $X$ is a finite set of ground terms, and $t$ is a ground term.

\subsection{Assertions and derivations}
We now present the formal details of the model with assertions, a version of which was first proposed in \cite{RSS14}. 
% Assertions refer to terms from $\setofterms$ as well as \textbf{boxed terms} from the set $\hiddenterms = \{\hidden{t} \mid t \in \setofterms\}$. The meaning of boxed terms will be made clear later.
The set of assertions, $\asserts$, is given by the following syntax (fixing a set of variables, and a set of predicates for each arity):
\begin{tabbing}
	$\alpha := t = t'$ \= $\pipe \alpha_1 \disj \alpha_2 \pipe \alpha_1 \conj \alpha_2 \pipe \exists{x}:\alpha \pipe m\ \says\ \alpha$ \\
	\> $\pipe \texttt{valid}(m) \pipe \texttt{elg}(m) \pipe \ldots \pipe m\ \sent\ t \pipe m\ \sent\ \alpha$
\end{tabbing}
where $t \in \setofterms$, $m \in \ag \cup \vars$, and \texttt{valid} and \texttt{elg} are application-specific predicates. The ellipses signify that one may add more such simple predicates, depending on the application requirements (as in the FOO protocol, from Section~\ref{subsec:fooasserts}).  A \emph{ground assertion} is one with no free variables.

The set of assertions is a positive fragment of existential first-order logic. The intention is that in addition to ground terms,  agents also communicate ground assertions to each other. Agents are allowed to assert equality of terms, and basic predicates on terms, as well as disjunctions and conjunctions. They can also ``sign'' assertions by use of the $\says$ operator. They also have the capability of existentially abstracting some terms from an assertion, thereby modelling \emph{witness hiding}. The sole use of the $\sent$ operator is to enable an observer to record who communicated a term or an assertion.

\begin{table}[ht]
{
\begin{center}
\setlength{\tabcolsep}{0.3em}
\begin{tabular}{|Sc|Sc|Sc|Sc|Sc|Sc|}
%& & & & & \\
\hline
\multicolumn{2}{|Sc|}{
\begin{math}
\begin{prooftree}
	X \vdash_{\DY} m
	\justifies X, \Phi \vdash m = m \using [m \in \basic \cup \vars]
\end{prooftree}
\end{math}
}
&
\multicolumn{2}{Sc|}{
\begin{math}
\begin{prooftree}
	\justifies X, \Phi \cup \{\alpha\} \vdash \alpha \using \rnax
\end{prooftree}
\end{math}
}
&
\multicolumn{2}{Sc|}{
\begin{math}
\begin{prooftree}
	X, \Phi \vdash \alpha(t) \hspace{2mm} X, \Phi \vdash t = t'
	\justifies X, \Phi \vdash \alpha(t')
\end{prooftree}
\end{math}
}
\\
\hline
\multicolumn{2}{|Sc|}{
\begin{math}
\begin{prooftree}
	X, \Phi \vdash s = t \hspace{3mm} X, \Phi \vdash t = u
	\justifies X, \Phi \vdash s = u
\end{prooftree}
\end{math}
}
&
\multicolumn{2}{Sc|}{
\begin{math}
\begin{prooftree}
	X, \Phi \vdash s = t
	\justifies X, \Phi \vdash t = s
\end{prooftree}
\end{math}
}
&
\multicolumn{2}{Sc|}{
\begin{math}
\begin{prooftree}
	X, \Phi \vdash (s_{0}, s_{1}) = (t_{0}, t_{1})
	\justifies X, \Phi \vdash s_{i} = t_{i}
\end{prooftree}
\end{math}
}
\\
\hline
\multicolumn{2}{|Sc|}{
\begin{math}
\begin{prooftree}
	X, \Phi \vdash s = s' \hspace{2mm} X, \Phi \vdash t = t' 
	\justifies X, \Phi \vdash (s,t) = (s',t')
\end{prooftree}
\end{math}
}
&
\multicolumn{2}{Sc|}{
\begin{math}
\begin{prooftree}
	X, \Phi \vdash \{s_{0}\}_{s_{1}} = \{t_{0}\}_{t_{1}}
	\justifies X, \Phi \vdash s_{i} = t_{i} \using \bullet
\end{prooftree}
\end{math}
}
&
\multicolumn{2}{Sc|}{ 
\begin{math}
\begin{prooftree}
	X, \Phi \vdash s = s' \hspace{2mm} X, \Phi \vdash m = m' 
	\justifies X, \Phi \vdash \{s\}_{m} = \{s'\}_{m'}
\end{prooftree}
\end{math}
}
\\
\hline
%%%%%%%%%%%%%%%%
\multicolumn{4}{|Sc|}{
\begin{math}
\begin{prooftree}
	X, \Phi \vdash m = n
	\justifies X, \Phi \vdash \alpha \using \rncontr\ [m, n \in \basic, m \neq n]
\end{prooftree}
\end{math}
}
&
\multicolumn{2}{Sc|}{
\begin{math}
\begin{prooftree}
	X, \Phi \vdash \alpha \hspace{3mm} X \vdash_{\DY} \priv(A)
	\justifies X, \Phi \vdash A \ \says\  \alpha \using \rnsays{A}
\end{prooftree}
\end{math}
}
\\
\hline
\multicolumn{2}{|Sc|}{
\begin{math}
\begin{prooftree}
	X, \Phi \vdash \alpha \hspace{2mm} X, \Phi \vdash \beta
	\justifies X, \Phi \vdash \alpha \conj \beta \using \rnandintro{}
\end{prooftree}
\end{math}
}
&
\multicolumn{2}{Sc|}{
\begin{math}
\begin{prooftree}
	X, \Phi \vdash \alpha_{1} \conj \alpha_{2}
	\justifies X, \Phi \vdash \alpha_{i} \using \rnandelim{}
\end{prooftree}
\end{math}
}
&
\multicolumn{2}{Sc|}{
\begin{math}
\begin{prooftree}
	X, \Phi \vdash A\ \says\ \alpha
    \justifies X, \Phi \vdash \alpha \using \textit{strip}
\end{prooftree}
\end{math}
}
\\
\hline
\multicolumn{2}{|Sc|}{
\begin{math}
\begin{prooftree}
X, \Phi \vdash \alpha_i
\justifies X, \Phi \vdash \alpha_1 \disj \alpha_2 \using \rnorintro{}
\end{prooftree}
\end{math}
}
&
\multicolumn{4}{Sc|}{
\begin{math}
\begin{prooftree}
	X, \Phi \vdash \alpha \disj \beta \hspace{3mm} X, \Phi\cup\{\alpha\} \vdash \delta \hspace{3mm} X, \Phi \cup \{\beta\} \vdash \delta
	\justifies X, \Phi \vdash \delta \using \rnorelim{}
\end{prooftree}
\end{math}
}
\\
\hline
\multicolumn{2}{|Sc|}{
\begin{math}
\begin{prooftree}
	X, \Phi \vdash \alpha(t)
    \justifies X, \Phi \vdash \exists x: \alpha(x) \using \rnexistsintro
\end{prooftree}
\end{math}
}
&
\multicolumn{4}{Sc|}{
\begin{math}
\begin{prooftree}
	X, \Phi \vdash \exists{}x:\alpha(x) \hspace{3mm} X, \Phi \cup \{\alpha(y)\} \vdash \beta
	\justifies X, \Phi \vdash \beta \using \rnexistselim
\end{prooftree}
\end{math}
}
\\
\hline
\end{tabular}
\end{center}
}
\caption{Derivation rules for assertions. We assume that $X \DYderives x$ for all variables $x$, and that $\inv(x) = x$. In the $\bullet$ rule, we require that $X \DYderives \inv(s_{1})$ and $X \DYderives \inv(t_{1})$. In the $\rnexistselim$ rule, we require that $y \not\in \textit{Vars}(X, \Phi \cup \{\beta\})$.}
\label{tab:assrules}
\end{table}

%\subsubsection*{More on disjunction and existentials}
%The most important use of disjunctive assertions is to convey \emph{partial knowledge}. For example, one concrete instantiation of the $\texttt{valid}(x)$ assertion used in the FOO protocol is using the disjunction $x = v_{0} \disj \cdots \disj v_{n}$, where $v_{0}, \ldots, v_{n}$ are the valid options. This conveys to the authority that the vote is from the accepted set, without revealing the vote itself. 
%
%Existential quantification can be used to express a wide variety of properties. A basic use is to code assertions involving \emph{destructors}. For instance, an assertion of the form:
%\[
%	\exists{}x,r: \{v\}_{r_{A}} = \{x\}_{r} \conj \texttt{valid}(x)
%\] 
%would be expressed as follows in a formal model with destructors:
%\[
%	\texttt{valid}(\textsf{decrypt}(\{v\}_{r_{A}}, \inv(r_{A}))).
%\]
%It is also naturally used to express re-encryption. The following expresses that $t'$ is a re-encryption of $t$:
%\[
%	\texttt{re-encryption}(t,t') := \exists{}x,k,k': t = \{x\}_{k} \conj t' = \{x\}_{k'}.
%\]
%The following expresses that the list $[t'_{1}, \ldots, t'_{n}]$ is a permutation of re-encryptions of $[t_{1}, \ldots, t_{n}]$:
%\[
%	\bigwedge_{i \leq n} \bigvee_{j \leq n} \texttt{re-encryption}(t_{j},t'_{i}).
%\]

In the course of participating in a protocol, agents accumulate a database of ground terms and ground assertions communicated to them. The proof system for assertions is presented in Table~\ref{tab:assrules}. The rules are presented in terms of sequents $X, \Phi \vdash \alpha$, where $X$ is a finite set of ground terms and $\Phi$ is a finite set of assertions (which are not necessarily ground). 

Equality assertions form a central part of communications between agents. Note that an agent $A$ can derive $t=t$ only when all basic subterms of $t$ can be derived by $A$. The recipient of an equality assertion can use the rules provided in Table~\ref{tab:assrules} to reason further about the terms involved therein. Our rules for equality are fairly intuitive and reflect basic properties of the pairing and encryption operations. Equality assertions are most likely to be used in existentially quantified assertions. Notable among the other rules are $\rnsays{A}$, which allows the possessor of $\priv(A)$ to ``sign'' an assertion in $A$'s name, and $\textit{strip}$, which allows one to strip the sign in $A\ \says\ \alpha$ and use $\alpha$ in local reasoning.

These rules allow agents to carry out non-trivial inferences, potentially learning more than was intended by the protocol. Suppose an agent $A$ has a term $\{v\}_{k}$, which he knows be a nonce encrypted with some key, but whose inverse he does not have access to. One would presume that $A$ therefore should have no idea about the value of $v$. However, it is possible for assertions about $\{v\}_{k}$ to reveal more information to $A$. Suppose $A$ manages to obtain two certificates $\exists{}x,y: \{v\}_{k} = \{x\}_{y}\ \conj\ (x = 0 \disj x = 1)$ and $\exists{}x,y: \{v\}_{k} = \{x\}_{y} \conj (x = 0 \disj x = 2)$. Let us call these assertions $\exists{}x,y: \alpha(x,y)$ and $\exists{}x,y: \alpha'(x,y)$. These two assertions are in $A$'s database of assertions $\Phi$. Let $a,b,a',b'$ be new variables that do not occur in $\Phi$. Consider $\Phi \cup \{\alpha(a,b), \alpha'(a',b')\}$. From $\{v\}_{k} = \{a\}_{b}$ and $\{v\}_{k} = \{a'\}_{b'}$, we get $\{a\}_{b} = \{a'\}_{b'}$, and hence $a = a'$ and $b = b'$. From the other parts of $\alpha$ and $\alpha'$, and using transitivity, we get $a = 0 \disj a = 1$ and $a = 0 \disj a = 2$. We use disjunction elimination to get $a = 0$. From this we conclude that $\{v\}_{k} = \{0\}_{b}$, and hence $\Phi \cup \{\alpha(a,b), \alpha'(a',b')\} \vdash \exists{}y:(\{v\}_{k} = \{0\}_{y})$. Therefore, using the $\rnexistselim$ rule, we get $\Phi \vdash \exists{}y:(\{v\}_{k} = \{0\}_{y})$.

In the formal model of \cite{BHM08,BMU08}, each zkp term proves a formula involving some private and some public variables. The recipient of a zkp term is deemed to have knowledge of the terms used in place of the public variables, but not the private ones. We adopt a similar convention. For an assertion $\alpha$, if an equality of the form $t=t'$ occurs in it, or if $\alpha$ involves the application of a predicate to a term $t$, then $\alpha$ reveals $t$. However, if a term of the form $\{v\}_{k}$, say, appears in $\alpha$, then $\alpha$ does not reveal $v$. We also adopt the convention that every term revealed by an assertion is sent earlier in the protocol.

\subsection{Actions, roles and protocols}
\label{subsec:roles}
There are six type of actions -- send, anonymous send, receive, confirm, deny, and insert. Sends, anonymous sends, and receives are of the form $\sendact{A}{\vec{m}}{t}{\alpha}$, $\sendact{A^{*}}{\vec{m}}{t}{\alpha}$ and $\recact{A}{t}{\alpha}$ respectively, where $A \in \ag \cup \{\agentid\}$ (where $\agentid$ is a dedicated variable that stands for the agent performing the action), $\vec{m} \subseteq \vars \cup \nonces \cup \keys$ stands for nonces and keys that are \emph{fresh} which should be instantiated with hitherto unused values in each occurrence of this action, $t \in \setofterms$ and $\alpha \in \asserts$. The $\confirmact{A}{\alpha}$ and $\denyact{A}{\alpha}$ actions allow $A$ to branch on whether or not he can derive $\alpha$, while $\insertact{A}{\alpha}$ allows $A$ to add previously unknown true assertions into her database. For $A \in \ag \cup \{\agentid\}$, an $A$-action is an action which involves $A$. A \emph{ground action} is one without any variable occurrence. An $A$-role is a finite sequence of $A$-actions. A role is an $A$-role for some $A \in \ag \cup \{\agentid\}$. A \textit{protocol} $\prot$ is a finite set of roles.

Given a sequence of actions $\eta = a_{1}\cdots{}a_{n}$, we say that the variable $x$ \emph{originates at} $i$ if $x$ occurs in $a_{i}$ and does not occur in $a_{j}$ for any $j < i$. A variable $x$ occurring in a role $\eta$ is said to be \emph{bound} if it originates at $i$ and either $a_{i}$ is a receive action, or $a_{i} = \sendact{A}{\vec{y}}{t}{\alpha}$ is a send action with $x \in \vec{y}$.

As an example, we show the voter role for the FOO protocol from Section~\ref{sec:foo}. In this role, $v$ and $\agentid$ stand for the vote and voter respectively, while $k, k'$ are fresh keys, and $\textit{auth}$ is a bound variable (since it originates in a receive) which stands for the authority with whom the voter interacts.

\begin{eqnarray*}
			+\agentid &:& (k)\ \{v\}_{k}, \agentid\ \says\ \{\exists{}x,r:\{x\}_{r} = \hidden{\{v\}_{k}} \conj \texttt{valid}(x)\} \\
			-\agentid &:& \textit{auth}\ \says\ [\texttt{elg}(\agentid) \ \conj\ \texttt{voted}(\agentid, \{v\}_{k}) \\
			& & \ \conj\ \agentid\ \says\ \{\ \exists x, r: \{x\}_{r} = \{v\}_{k} \conj\ \texttt{valid}(x)\ \}\ ] \\
			+\agentid^{*}&:&(k')\ (\{v\}_{k'}, k'), \\
			& & \exists{}X,y,s: \textit{auth}\ \says\ [ \texttt{elg}(X) \conj \texttt{voted}(X, \{y\}_{s}) \\
			& & \conj\ X\ \says\ \{ \exists{}x,r: \{x\}_{r} = \{y\}_{s} \conj \texttt{valid}(x) \} ] \conj\ y = v
\end{eqnarray*}

The authority and counter roles can also be extracted from the protocol description in a similar manner.

\subsection{Runs of a protocol}
Even though the roles of a protocol mention variables, its \textit{runs} (or executions) consist only of ground terms and assertions exchanged in various \emph{instances} of the roles. An instance of a role is formally specified by a \emph{substitution} $\sigma$, which is a partial map from $\vars$ to the set of all ground terms. We lift $\sigma$ for terms, assertions and actions in the standard manner. $\sigma$ is said to be \emph{suitable} for an action $a$ if $\sigma(a)$ is an action, i.e. a typing discipline is followed. A substitution is suitable for a role $\eta$ if it is defined on all free variables of $\eta$ and suitable for all actions in $\eta$. %In the rest of the paper, all substitutions we consider are suitable for the actions and roles concerned.
%(for example, if $a = \sendact{\agentid}{\vec{x}}{t}{\alpha}$ is an action, $\sigma$ is suitable for $a$ and if $\{y\}_{z}$ and $\pub(w)$ occur in $t$, then $\sigma(\agentid) \in \ag$, $\sigma(\vec{x}) \subseteq \nonces \cup \keys$, $\sigma(z) \in \keys$ and $\sigma(w) \in \ag$). 

A \textit{session} of a protocol $\prot$ is a sequence of actions of the form $\sigma(\eta)$, where $\eta \in \prot$ and $\sigma$ is suitable for $\eta$. 

A run of a protocol is an interleaving of sessions in which each agent can construct the messages that it communicates. This is formalized by a notion of \emph{knowledge state}, which represents all the terms and assertions that each agent knows. A \emph{control state} is a record of progress made by an agent in the various sessions he/she participates in. %Finally, we also need to define how knowledge states and control states evolve on performance of an action. 

%In the Dolev-Yao model with only terms, knowledge states are tuples containing only sets of terms for each agent, i.e. of the form $((X_{A})_{A \in \ag})$. The enabling conditions and the resulting updates on the two permitted actions are quite intuitive -- on a send action, the sender needs to be able to derive the communicated term from his $X$ set (including any fresh nonces that he might have used in this term), while on a receive action, the intruder needs to be able to derive the received term. The updates are also simple -- a send action results in the communicated term being added to the intruder's $X$ set, while a receive leads to the term being added to the recipient's $X$. These mirror our ideas of sends and receives being implicit receives and sends by the intruder.

A knowledge state $\ks$ is a tuple $((X_A, \Phi_A)_{A \in \ag})$, where $X_A$ (resp. $\Phi_A$) is the set of ground terms (resp. ground assertions) belonging to an agent $A$. A control state $S$ is a finite set of sequences of actions. A protocol state is a pair $(\ks, S)$ where $\ks$ is a knowledge state and $S$ is a control state. 

% This would be adequate to model a protocol that only communicates terms. But we need to consider a few more subtleties when assertions are also communicated. Consider the following communication in the FOO protocol, described earlier.
% \[V \rightarrow A : \{v\}_{r_{A}}, V\ \says\ \{\ \exists x, r: \{x\}_{r} = \{v\}_{r_{A}}.
% \]
% Translated directly to our model, this would be the following action in the voter role.
% \[
% 	\sendact{\agentid}{k}{\{y\}_{k}}{\agentid\ \says\ \{\exists{}x,r: \{x\}_{r} = \{y\}_{k}\}}.
% \]
% A concrete instance of voter $V$ voting for $v$ would be the following action (where $r_{A}$ is a fresh random key, generated for this session):
% \[
% 	\sendact{V}{r_{A}}{\{v\}_{r_{A}}}{V\ \says\ \{\exists{}x,r: \{x\}_{r} = \{v\}_{r_{A}}\}}.
% \]
%
% But there is a problem with the communicated assertion. In general, one would expect any name that occurs in an assertion to be known to the recipient, so in some sense the assertion reveals $v$, even though it is securely encrypted by  the voter. But the intention is not to name $\{v\}_{r_{A}}$ explicitly in the assertion, but only to say that the accompanying term, which is a black box, has a certain structure. So a better way to model this is to say that the communicated assertion is more abstract -- the following, say.
% \[
% 	\sendact{V}{r_{A}}{\{v\}_{r_{A}}}{V\ \says\ \{\exists{}x,r: \{x\}_{r} = h\}}.
% \]
% Here $h$ is a \emph{handle} pointing to the term $\{v\}_{r_{A}}$. The idea is that before sending an assertion, the

\begin{definition}
Let $(\ks, S)$ and $(\ks', S')$ be two states of a protocol $\prot$, and let $b$ be a ground action. We say that $(\ks, S) \xrightarrow{b} (\ks', S')$ iff there is a session $\eta = a\cdot\eta' \in S$ and a substitution $\sigma$ suitable for $\eta'$ such that:
\begin{itemize}
	\item $b = \sigma(a)$
	\item $S' = (S \setminus \{\eta\}) \cup \{\sigma(\eta')\}$
	\item $\ks \xrightarrow{b} \ks'$ as given in Table~\ref{tab:updates}. 
\end{itemize}
\label{def:updates}
\end{definition}
% The criteria for an action $a$ to be enabled at state $(\ks, S)$ and the conditions for the update $\ks \xrightarrow{a} \ks'$ are given in Table~\ref{tab:enabling} and Table~\ref{tab:updates}.

%As we see in Definition!\ref{def:updates}, on executing an action of a session, the substitution associated with the session also gets update. This is reflected by adding $\sigma(\eta')$ rather than merely $\eta'$. 
In Definition~\ref{def:updates}, we add $\sigma(\eta')$ rather than $\eta'$, in order to update the substitution associated with the session on executing the action. This update reflects the new values generated for each fresh nonce variable (in case the action is a send) or the new bindings for input variables (in case the action is a receive). For instance, if $\eta = a\cdot\eta'$ where $a = \recact{A}{(x,y)}{\alpha(x,y)}$ and $b = \recact{A}{(t,t')}{\alpha(t,t')}$, then $\sigma = [x:=t, y:=t']$. Any occurrence of $x$ in $\eta'$ is bound to $t$. 

%Evolving the substitution is intuitive if we think of sessions as programs. It also offers us more flexibility in the following sense. Suppose a send action of a session $\eta_{0}$ ``interacts with'' a receive action in a session $\eta$. If we want to study the effect of changing some of the values conveyed by $\eta_{0}$, we need not change $\eta$ to reflect this change. $\eta$ will automatically evolve in a different manner (reflecting the change in the communication it receives).

\begin{table}[h]
\centering
\setlength{\tabcolsep}{0.4em}
\renewcommand{\arraystretch}{1.2}
\begin{tabular}{|c|c|c|}
\hline
Action $b$ & Enabling conditions & Updates\\
\hline
\multirow{2}{*}{$\sendact{A}{\vec{m}}{t}{\alpha}$} & $X_{A} \cup \vec{m} \vdash t$ & $X'_{A} = X_{A} \cup \vec{m}$ \ \ $X'_{I} = X_{I} \cup \{t\}$ \\ 
& $X_{A} \cup \vec{m}, \Phi_{A} \vdash \alpha$ & $\Phi'_{I} = \Phi_{I} \cup \{\alpha, A\ \sent\ t, A\ \sent\ \alpha\}$ \\
\hline
\multirow{2}{*}{$\sendact{A^{*}}{\vec{m}}{t}{\alpha}$} & $X_{A} \cup \vec{m} \vdash t$ & $X'_{A} = X_{A} \cup \vec{m}$ \ \ \ $X'_{I} = X_{I} \cup \{t\}$ \\
& $X_{A} \cup \vec{m}, \Phi_{A} \vdash \alpha$ & $\Phi'_{I} = \Phi_{I} \cup \{\alpha\}$  \\
\hline
\multirow{2}{*}{$\recact{A}{t}{\alpha}$} & $X_{I} \vdash t$ & $X'_{A} = X_{A} \cup \{t\}$ \\
& $X_{I}, \Phi_{I} \vdash \alpha$ & $\Phi'_{A} = \Phi_{A} \cup \{\alpha\}$ \\
\hline
$\confirmact{A}{\alpha}$ & $X_{A}, \Phi_{A} \vdash \alpha$ & No change \\
\hline
$\denyact{A}{\alpha}$ & $X_{A}, \Phi_{A} \nvdash \alpha$ & No change \\
\hline
$\insertact{A}{\alpha}$ & Always enabled & $\Phi'_{A} = \Phi_{A} \cup \{\alpha\}$ \\ 
\hline
\end{tabular}
\caption{Enabling conditions for $\ks \xrightarrow{b} \ks'$. We assume that for each agent $A$, $(X_{A}, \Phi_{A})$ and $(X'_{A}, \Phi'_{A})$ represent $A$'s knowledge in $\ks$ and $\ks'$, respectively.}
\label{tab:updates}
\end{table}
%Send and receive actions may mention terms in addition to the assertion, but confirm, deny and insert actions only mention assertions. For a send action involving a set $\vec{m}$ of fresh nonces and keys, a term $t$ and an assertion $\alpha$, the sender needs to be able to derive $t$ from his $X \cup \vec{m}$, and $\alpha$ from his $X \cup \vec{m}$ and $\Phi$; for a receive of $t$ and $\alpha$, the intruder $I$ needs to be able to derive both $t$ and $\alpha$; a confirm (resp. deny) action on an assertion $\alpha$ by an agent $A$ is enabled iff $A$ can (resp. cannot) derive $\alpha$; an insert action is always enabled. The updates also are along similar lines -- on a send action, the sender and the intruder add the sent term (resp. assertion) to their $X$ (resp. $\Phi$) sets; on a receive, the recipient updates his $X$ (resp. $\Phi$) set with the received term (resp. assertion); confirm and deny actions result in no changes to the knowledge state; on an insert action the agent adds the inserted assertion to his/her $\Phi$ set.

Note the crucial difference between the updates for sends and anonymous sends -- in the former, the intruder updates its state with $A\ \sent\ t$ and $A\ \sent\ \alpha$, whereas in the latter, no sender information is available to any observer (including the intruder).

An \emph{initial control state} of $\prot$ is a finite set of sessions of $\prot$. In the \emph{initial knowledge state}, each agent has her own secret keys and shared keys, all public keys in her database, and potentially some constants of $\prot$.
 
\begin{definition}
A \textit{run} of a protocol $\prot$ is $(\ks_{0}, a_{1}\cdots a_{n})$ such that $\ks_{0}$ is an initial knowledge state, and there exist sequences $\ks_{1}, \ldots, \ks_{n}$ and $S_{0}, \ldots, S_{n}$ such that $(\ks_{i-1}, S_{i-1}) \xrightarrow{a_{i}} (\ks_{i}, S_{i})$ for all $i \leq n$.
\end{definition}

\subsection{Notes on implementability}
A central aspect of this model is that communicated assertions are ``believed'' by the recipients. This is reflected in the updates for receive actions. On the other hand, it is not possible for a malicious agent to inject ``falsehoods'' into the system, as evidenced by the enabling conditions which only allow derivable assertions to be communicated. How might all this be realized in practice? 

An implementation is to demand that every communicated assertion be translated into an appropriate zero knowledge proof. But suppose an agent receives ZKPs for assertions $\alpha$ and $\beta$ from $A$ and $B$, and wishes to send $\alpha \conj \beta$ to someone else. For this, she should have the capacity to produce a ZKP for $\alpha \conj \beta$. This implements the $\rnandintro$ rule in our system. Clearly this requires some mechanism for composing ZKPs. Such a system has been studied in~\cite{MPR13}, which proposes a logical language close to ours, and also discusses modular construction of ZKPs, based on the seminal work on composability of ZKPs~\cite{GS08}.

However, \cite{MPR13} has some restrictions on the proof rules for which one can modularly construct ZKPs. For instance, they do not consider disjunction elimination or existential elimination. Nevertheless, we consider these rules since they are at the heart of potential attacks (as illustrated by the earlier example). This situation can be handled formally by making a distinction between rules that are ``safe for composition'' and rules that are not. A rule like $\rnandintro$ is safe for composition, for example, whereas $\rnorelim$ might not be. We then adopt the restriction that we communicate assertions that are derived using only safe rules. If the derivation of an assertion necessarily involves unsafe rules, then it cannot be communicated to another agent, even though this derivation itself is allowed for local reasoning. In this paper, we therefore consider both local reasoning to derive more assertions (to gain more knowledge about some secrets, for instance) as well as deriving communicable assertions.

\section{Formalizing anonymity}
\label{sec:unlink}

Informally, we say that a voting protocol satisfies anonymity if in all executions of the protocol, no adversary can deduce the connection between a voter and her vote. One way to formalise it is to consider a run $\rho$ where voter $V_{0}$ voted $0$ and voter $V_{1}$ voted $1$, and show that there is some run $\rho'$ where the votes of $V_{0}$ and $V_{1}$ are swapped and every other voter acts the same as in $\rho$, such that even the most powerful intruder $I$ (who has access to all keys of the authorities) cannot distinguish $\rho$ from $\rho'$. 

% For a send action $a$ of the form $\sendact{A}{\vec{x}}{t}{\alpha}$, $\senderof(a) = A$. For an anonymous send action $a$ of the form $\sendact{A^{*}}{\vec{x}}{t}{\alpha}$, $\senderof(a) = *$. We say that two send actions $a$ and $b$ are sender-indistinguishable if $\senderof(A) = \senderof(B)$.

\begin{definition}
Let $(\ks, \rho)$ and $(\ks', \rho')$ be two runs of $\prot$, where $\rho = a_{1}\cdots{}a_{n}$ and $\rho' = a'_{1}\cdots{}a'_{n}$. Let $t_{i}$ and $t'_{i}$ be the terms communicated in $a_{i}$ and $a'_{i}$, respectively. Let $(X, \Phi)$ and $(X', \Phi')$ be the knowledge states of $I$ at the end of each run.
	
We say that $(\ks, \rho)$ is \textit{$I$-indistinguishable} from  $(\ks', \rho')$ -- denoted $(\ks, \rho) \sim_{I} (\ks', \rho')$ -- if for all assertions $\alpha(x_{1}, \ldots, x_{k})$ and all sequences $i_{1} < \cdots < i_{k} \leq n$: 
	\[
	X, \Phi \derives \alpha(t_{i_{1}}, \ldots, t_{i_{k}})\ \text{iff}\ X', \Phi' \derives \alpha(t'_{i_{1}}, \ldots, t'_{i_{k}}).
	\]
\end{definition}

One can view the parameters $x_{1}, \ldots, x_{k}$ occurring in the above definition as \emph{handles}, and the mapping from $x_{1}, \ldots, x_{k}$ to $t_{i_{1}}, \ldots, t_{i_{k}}$ as an \emph{active substitution}. Parametrized assertions $\alpha(x_{1}, \ldots, x_{k})$ constitute \emph{tests} on each run of the protocol. Thus the above notion is related to the notion of static equivalence that is central to protocol modelling in the applied-pi calculus~\cite{BHM08,BMU08,KR05}. Note that the notion of indistinguishability we use here is trace-based, as that fits naturally with our model. But it is also possible to have a bisimulation-based definition, and adapt our proof ideas. 

Consider a voting protocol $\prot$ with three roles -- \emph{voter}, \emph{authority} and \emph{counter}, and two phases: \emph{authorization} and \emph{voting}. For simplicity, we assume that there are two fixed agents $A$ and $C$ who play the authority and counter role, respectively. If there is only one voter in a run, then obviously his/her vote can be linked to him/her. If a voter's vote is counted during the authorization phase, then we might have a situation where a vote is cast by a voter before anyone else has been authorized. This again is an easy violation of anonymity. Therefore we assume that in any run of $\prot$, there are at least two agents playing the voter role, and all $V_{i} \rightarrow A$ actions precede all $V_{j} \rightarrow C$ actions. 
	
Fix voter names $V_{0}$, $V_{1}$, and votes $v_{0}$ and $v_{1}$. A session $\eta$ of $\prot$ is said to be an $(i,j)$-session if $\eta$ maps $\agentid$ to $V_{i}$ and $v$ to $v_{j}$. 
\begin{definition}	
	We say that $\prot$ \textit{satisfies anonymity} if for every initial knowledge state $\ks = (X, \Phi)$ such that $X_{A} \cup X_{C} \subseteq X_{I}$, and for every run $(\ks, \rho)$ which includes a $(0,0)$-session and a $(1,1)$-session, there is a run $(\ks, \rho')$ which includes a $(1,0)$-session and a $(0,1)$-session such that $(\ks, \rho) \sim_{I} (\ks, \rho')$.
\end{definition}

\begin{theorem}
	The FOO protocol satisfies anonymity.
\end{theorem}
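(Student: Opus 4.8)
The plan is to prove anonymity directly from the definition: given a run $(\ks,\rho)$ containing a $(0,0)$-session and a $(1,1)$-session, I will exhibit a swapped run $(\ks,\rho')$ containing a $(0,1)$-session and a $(1,0)$-session and show $(\ks,\rho)\sim_I(\ks,\rho')$. The witness $\rho'$ is obtained from $\rho$ by three edits, leaving every other session untouched: (i) relabel the two distinguished voter sessions so that $V_0$ now votes $v_1$ and $V_1$ now votes $v_0$; (ii) choose fresh authority-side keys $\ell_0,\ell_1$ for these sessions in place of the $k_0,k_1$ used in $\rho$; and (iii) at the two anonymous counter sends, swap their positions in the interleaving and reuse the very keys $k_0',k_1'$ that $\rho$ used, so that $V_1$'s counter send in $\rho'$ transmits $(\{v_0\}_{k_0'},k_0')$ and $V_0$'s transmits $(\{v_1\}_{k_1'},k_1')$. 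Edit (iii) is the key trick: the counter messages are anonymous, so the intruder logs no sender, and with these choices the pair of a term and an assertion recorded at each counter position is \emph{literally identical} in $\rho$ and $\rho'$ (the counter assertion $\exists X,y,s:A\says[\cdots\conj\texttt{voted}(X,\{y\}_s)\conj\cdots]\conj y=v$ mentions only the plaintext $v$, hiding everything else).

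First I would verify that $\rho'$ is a genuine run by checking the enabling conditions of Table~\ref{tab:updates} action by action. The only nonroutine cases are the voter sends: for $V_0$'s authorisation send I need $V_0$ to derive $\{v_1\}_{\ell_0}$ and the signed assertion $V_0\says\{\exists x,r:\{x\}_r=\{v_1\}_{\ell_0}\conj\texttt{valid}(x)\}$, which follows by $\rnexistsintro$ (witnessing $x:=v_1,\ r:=\ell_0$), the reflexive equality $\{v_1\}_{\ell_0}=\{v_1\}_{\ell_0}$, the assertion $\texttt{valid}(v_1)$, and ownership of $\priv(V_0)$ through the $\rnsays{V_0}$ rule; symmetrically for $V_1$. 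This is exactly where I invoke symmetry of the initial state in $\{V_0,V_1\}$ and $\{v_0,v_1\}$: each honest voter can encrypt either vote and holds its validity assertion and signing key. The authority's deny/insert/response steps stay enabled because no voter votes twice, so the $\texttt{voted}(V_i,\cdot)$ guards still pass, and the counter's receives are enabled because the matching sends still occur.

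The substance is the indistinguishability claim. By construction the sequences of communicated terms and recorded assertions coincide at every position except the non-anonymous voter-to-authority sends and the matching authority responses: there $\rho$ carries the opaque block $\{v_0\}_{k_0}$ (and assertions mentioning it, e.g.\ $\texttt{voted}(V_0,\{v_0\}_{k_0})$) where $\rho'$ carries $\{v_1\}_{\ell_0}$, and symmetrically with $\{v_1\}_{k_1}$ versus $\{v_0\}_{\ell_1}$; all other sessions, being identical, contribute identically. Because $k_0,k_1,\ell_0,\ell_1$ are fresh and never transmitted, $I$ never obtains their inverses, so these blocks are opaque: no derivation can decompose them via the $\bullet$ rule (which demands the inverse keys), and $I$ cannot reconstruct them by encryption either. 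I would then prove, by induction on derivations, that replacing $\{v_0\}_{k_0}\leftrightarrow\{v_1\}_{\ell_0}$ and $\{v_1\}_{k_1}\leftrightarrow\{v_0\}_{\ell_1}$ throughout $I$'s knowledge is a derivability-preserving bijection on assertions about the logged terms. Since this map is an involution, one direction suffices and yields the biconditional $X,\Phi\derives\alpha(\vec{t})\Leftrightarrow X',\Phi'\derives\alpha(\vec{t}')$ required for $\sim_I$.

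The main obstacle is closing off the anonymity-breaking inference inside that induction: ruling out that $I$ combines the visible fact $A\says[\cdots\conj\texttt{voted}(V_0,\{v_0\}_{k_0})\conj\cdots]$ with an anonymous counter assertion $\exists X,y,s:A\says[\cdots\conj\texttt{voted}(X,\{y\}_s)\conj\cdots]\conj y=v$ to recover the voter--vote link. After $\rnexistselim$ the counter assertion yields $\texttt{voted}(X_0,\{y_0\}_{s_0})$ with $y_0=v$ for fresh $X_0,y_0,s_0$; to identify $X_0$ with $V_0$, $I$ would need $\{y_0\}_{s_0}=\{v_0\}_{k_0}$, hence $s_0=k_0$, but the counter reveals only the plaintext $v$ and never the authority key $k_0$, and no axiom forces the hidden witness key. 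Making precise that no admissible combination of $\rnexistselim$, the equality rules, $\rncontr$ and transitivity can pin down these witnesses --- equivalently, that every derivable assertion is invariant under the opacity swap --- is the delicate step, and it is precisely what the opacity lemma above is designed to supply.
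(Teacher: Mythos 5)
Your proposal takes essentially the same route as the paper's own proof: you construct $\rho'$ by swapping the voter identities at the two anonymous counter sends (where the transmitted term--assertion pairs stay literally identical, since no sender is logged) and exchanging the opaque authorization-phase blobs, and then you reduce indistinguishability to an induction on derivations showing that the blob-swap is derivability-preserving, which rests exactly on the paper's ``safety'' observation that the intruder never derives the encryption keys or any equality pinning down the hidden witnesses. The only differences are cosmetic rather than substantive: the paper swaps the two existing blobs $d=\{v_0\}_{p}$ and $e=\{v_1\}_{q}$ instead of re-randomizing with fresh keys $\ell_0,\ell_1$, and it discharges your acknowledged ``delicate step'' concerning $\rnexistselim$ concretely, by closing $\Phi$ under fresh witnesses and reducing $\derives$ to derivability $\vdash_{1}$ without existential elimination before running the induction.
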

\begin{proof}
Recall the voter role for FOO from Section~\ref{subsec:roles}. Consider a run $(\ks, \rho)$ of FOO whose initial control state is $S \cup \{\eta_{0}, \eta_{1}\}$, where $\eta_{0}$ is the $(0,0)$-session and $\eta_{1}$ is the $(1,1)$-session. Let $\eta_{2}$ and $\eta_{3}$ be the $(0,1)$-session and $(1,0)$-session, respectively. We construct a run $\rho'$ which includes $\eta_{2}$ and $\eta_{3}$ such that $(\ks, \rho) \sim_{I} (\ks, \rho')$. The session $\eta_{0}$ assigns values $p$ and $r$ to the keys $k$ and $k'$ from the role description, while $\eta_{1}$ assigns values $q$ and $s$ respectively. For ease of notation, we denote $v_{0}$ and $v_{1}$ by $u$ and $v$ respectively, and $d = \{u\}_{p}$ and $e = \{v\}_{q}$. 

Suppose $\rho = a_{1}\cdots{}a_{n}$. Assume without loss of generality that both sessions $\eta_{0}$ and $\eta_{1}$ are fully played out in $\rho$. Also without loss of generality, let $i < j < k < l$ be indices such that the send actions of $\eta_{0}$ are $a_{i}$ and $a_{k}$, and the send actions of $\eta_{1}$ are $a_{j}$ and $a_{l}$, where
\begin{eqnarray*}
a_{i} = \sendact{V_{0}}{p}{d}{\beta(d)} &\text{and}&  
a_{k} = \sendact{V_{0}^{*}}{r}{(\{u\}_{r}, r)}{\gamma(u)}	\\
a_{j} = \sendact{V_{1}}{q}{e}{\beta(e)} &\text{and}&  
a_{l} = \sendact{V_{1}^{*}}{s}{(\{v\}_{s}, s)}{\gamma(v)}	
\end{eqnarray*}
We build $\rho' = b_{1}\cdots{}b_{n}$ as shown in Figure~\ref{fig:runswap}. 
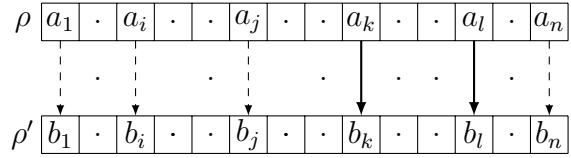
\begin{figure}[h]
\begin{center}
\begin{tikzpicture}
\node             at (-0.25,-1.25)   (A) {$\rho$}; %Label 2
%\path (B) edge[->]                 (0,-0.75);
\draw[step=0.5cm] (0,-1.5) grid (7,-1); %Grid 2

\node			  at (0.25,-1.25)      (S1) {$a_{1}$};
\node			  at (0.75,-1.25)      {$.$};
\node			  at (1.25,-1.25)      (V11) {$a_{i}$};
\node			  at (1.75,-1.25)      {$.$};
\node			  at (2.25,-1.25)      {$.$};
\node			  at (2.75,-1.25)      (V12) {$a_{j}$};
\node			  at (3.25,-1.25)      {$.$};
\node			  at (3.75,-1.25)      {$.$};
\node			  at (4.25,-1.25)      (V13) {$a_{k}$};
\node			  at (4.75,-1.25)      {$.$};
\node			  at (5.25,-1.25)      {$.$};
\node			  at (5.75,-1.25)      (V14) {$a_{l}$};
\node			  at (6.25,-1.25)      {$.$};
\node			  at (6.75,-1.25)	   (S1n) {$a_{n}$};

\node             at (-0.25,-2.75)   (B) {$\rho'$}; %Label 3
%\path (C) edge[->]                 (0,-2.25);

\draw[step=0.5cm] (0,-3) grid (7,-2.5); %Grid 3

\node			  at (0.25,-2.75)      (S2) {$b_{1}$};
\node			  at (0.75,-2.75)      {$.$};
\node			  at (1.25,-2.75)      (V21) {$b_{i}$};
\node			  at (1.75,-2.75)      {$.$};
\node			  at (2.25,-2.75)      {$.$};
\node			  at (2.75,-2.75)      (V22) {$b_{j}$};
\node			  at (3.25,-2.75)      {$.$};
\node			  at (3.75,-2.75)      {$.$};
\node			  at (4.25,-2.75)      (V23) {$b_{k}$};
\node			  at (4.75,-2.75)      {$.$};
\node			  at (5.25,-2.75)      {$.$};
\node			  at (5.75,-2.75)      (V24) {$b_{l}$};
\node			  at (6.25,-2.75)      {$.$};
\node			  at (6.75,-2.75)	   (S2n) {$b_{n}$};

\node			  at (0.75,-2)      {$.$};
%\node			  at (1.25,-2)      {$.$};
\node			  at (1.75,-2.75)   {$.$};
\node			  at (2.25,-2)      {$.$};
%\node			  at (2.75,-2)      {$.$};
\node			  at (3.25,-2.75)   {$.$};
\node			  at (3.75,-2)      {$.$};
\node			  at (4.75,-2)      {$.$};
\node			  at (5.25,-2)      {$.$};
\node			  at (6.25,-2)      {$.$};

\path (4.25,-1.5) edge [->,thick,>=latex] (4.25,-2.5); %Arrow from a_{k} to b_{k}
\path (5.75,-1.5) edge [->,thick,>=latex] (5.75,-2.5); %Arrow from a_{l} to b_{l}
\path (0.25,-1.5) edge [dashed,->,>=latex] (0.25,-2.5); %Dashed arrow from a_{1} to b_{1}
\path (1.25,-1.5) edge [dashed,->,>=latex] (1.25,-2.5); %Dashed arrow from a_{i} to b_{i}
\path (2.75,-1.5) edge [dashed,->,>=latex] (2.75,-2.5); %Dashed arrow from a_{j} to b_{j}
\path (6.75,-1.5) edge [dashed,->,>=latex] (6.75,-2.5); %Dashed arrow from a_{n} to b_{n}

\end{tikzpicture}
\end{center}
\caption{Building $\rho'$ from $\rho$. The dashed arrows capture $b_{m} = a_{m}[d \mapsto e, e \mapsto d]$, for all $m \not\in \{l,k\}$. For $m \in \{l, k\}$, the thick arrows stand for $b_{m} = a_{m}[V_{0} \mapsto V_{1}, V_{1} \mapsto V_{0}]$.}
\label{fig:runswap}
\end{figure}

Observe that $\rho'$ is also a run of FOO starting from the state $(\ks, S \cup \{\eta_{2}, \eta_{3}\})$, where $\eta_{2}$ contains $b_{i}$ and $b_{l}$, and $\eta_{3}$ contains $b_{j}$ and $b_{k}$. We crucially use the fact that we do not fix the instances of the fresh nonces a priori, so we can swap the action containing $p$ as a fresh nonce with the one containing $q$ as a fresh nonce, for example.

%It is not hard to see that $\rho'$ is also a run of FOO (starting from the state $(\ks, S \cup \{\eta_{2}, \eta_{3})$). $b_{i}$ and $b_{l}$ are the send actions in the session $\eta_{2}$ and $b_{j}$ and $b_{k}$ are the send actions in the session $\eta_{3}$ in $\rho'$. Note that we do not fix the instances of the fresh nonces a priori -- they get fixed only at the time when the send action is executed. We can see that $\eta_{2}$ uses the same values for $k$ and $k'$ -- namely $q$ and $s$ -- as $\eta_{1}$, and similarly $\eta_{3}$ uses values $p$ and $r$ for $k$ and $k'$, same as $\eta_{0}$. All other sessions are the same in $\rho'$ as in $\rho$ except that $d$ and $e$ are swapped.
	
%Let $(X, \Phi)$ and $(X', \Phi')$ be the knowledge states of $I$ at the end of $\rho$ and $\rho'$ respectively. Then there is some $\Psi(x,y)$ such that $\Phi = \Psi(d, e)$ and $\Phi' = \Psi(e, d)$. Assertions added to $\Phi$ are of two types: they are of the form $m\ \sent\ \alpha$, or $\alpha(d, e)$ that is communicated by an action. For assertions of the latter kind, the corresponding $\alpha(e, d)$ is added to $\Phi'$. The same holds for assertions of the former kind which are added as a result of any action $a_{i}$ other than $a_{k}$ and $a_{l}$, since the senders of $a_{i}$ and $b_{i}$ are the same. For $i = k$ or $i = l$, a $\sent$ assertion is not added to $\Phi$ at all, since these are anonymous sends. Thus there is $\Psi(x,y)$ such that $\Phi = \Psi(d, e)$ and $\Phi' = \Psi(e, d)$. 

For any term $t$ (resp. assertion $\alpha$), we define $\swp(t)$ (resp. $\swp(\alpha)$) to be the result of changing all occurrences of $d$ to $e$ and vice versa. $\swp$ is lifted to sets of terms and assertions as usual.

Let $(X, \Phi)$ and $(X', \Phi')$ be the knowledge states of $I$ at the end of $\rho$ and $\rho'$ respectively. It is evident from the construction of $\rho'$ that $X' = \swp(X)$. Furthermore, it is easy to see that neither $X$ nor $X'$ derive either $p$ or $q$, and that $X \DYderives t$ iff $X' \DYderives \swp(t)$.   
% We say that two sets of assertions $\Theta$ and $\Theta'$ are \emph{associates} if $\Theta' = \swp(\Theta)$.

It can also be seen that $\Phi' = \swp(\Phi)$, as elaborated below.
For every $m$, if $a_{m}$ communicates $\alpha$, then $b_{m}$ communicates $\swp(\alpha)$. The other formulas added to $\Phi$ are $\sent$ assertions. For every action $a_{m}$ other than $a_{k}$ and $a_{l}$, the sender of $b_{m}$ is unchanged from $a_{m}$. Therefore, a $\sent$ assertion with the same sender name would be added to $\Phi$ and $\Phi'$. For $a_{k}$ and $a_{l}$, no $\sent$ assertions are added since these are anonymous sends. Therefore, $\Phi' = \swp(\Phi)$. 

We now prove that $X, \Phi \derives \alpha(t_{i_{1}}, \ldots, t_{i_{k}})$ iff $X', \Phi' \derives \alpha(t'_{i_{1}}, \ldots, t'_{i_{k}})$, for all assertions $\alpha(x_{1}, \ldots, x_{k})$. It suffices to prove that $X, \Phi \derives \alpha$ iff $X', \Phi' \derives \swp(\alpha)$ for all $\alpha$. For every $\exists{}:\delta$, let $y_{\delta}$ be a variable that does not occur in $\Phi$. A set $\Theta$ is said to be \emph{closed under witnesses} if $\delta(y_{\delta}) \in \Theta$ for all $\exists{}y:\delta \in \Theta$. Let $\Pi$ be the smallest superset of $\Phi$ closed under witnesses. We use $\Pi'$ to denote $\swp(\Pi)$. It can be shown by an analysis of derivations that $X, \Phi \vdash \alpha$ iff $X, \Pi \vdash_{1} \alpha$ and $X', \Phi' \vdash \alpha$ iff $X', \Pi' \vdash_{1} \alpha$, where $\vdash_{1}$ denotes derivability without using the $\rnexistselim$ rule. Note that both $X, \Pi$ and $X', \Pi'$ are safe for $d$ and $e$ in the following sense. They do not derive equalities of the form $p = t$ or $q = t$ for any term $t$, and they do not derive equalities of the form $d = t'$ or $e = t'$ where $t'$ is a term containing a non-variable. We now prove the final claim needed for indistinguishability of $\rho$ and $\rho'$.
%The desired property follows from the following claim (which is proved by induction on size of derivations), noting that $X, \Phi$ and $X', \Phi'$ are both safe.

\medskip
\noindent \textbf{Claim.}\qquad For any $\alpha$, $X, \Pi \vdash_{1} \alpha$ iff $X', \Pi' \vdash_{1} \swp(\alpha)$.

\noindent \textbf{Proof of Claim}\qquad 
We prove the implication from left to right, by induction on structure of derivations. The other direction holds by symmetry. Suppose $\pi$ is a derivation of $X, \Pi \derives \alpha$, with last rule $r$. 
 \begin{description}
 	\item[$r = \rnax$:] Suppose $\alpha \in \Pi$. It follows that $\swp(\alpha) \in \Pi'$.
% 	\item[The last rule of $\pi$ is $\rnandintro$:] Suppose $\alpha(d, e) = \alpha_{0}(d, e) \conj \alpha_{1}(d, e)$, and both $\alpha_{0}(d, e)$ and $\alpha_{1}(d, e)$ are derivable using shorter proofs. By induction hypothesis, both $\alpha_{0}(e, d)$ and $\alpha_{1}(e, d)$ are derivable from $X', \Pi'$, and hence so is $\alpha(e, d)$.
 	% \item[$r = \rnexistselim$:] Suppose $X, \Pi \vdash \exists{}w:\alpha'(w)$ and $X, \Pi \cup \{\alpha'(y)\} \vdash \alpha$ using shorter proofs. It is not difficult to show that $X, \Pi \cup \{\alpha'(y)\}$ and $X',\Pi' \cup \{\swp(\alpha'(y))\}$ are safe. Hence by induction hypothesis we get that $X', \Pi' \vdash \exists{}w:\alpha'(w)$ and $X', \Pi' \cup \{\swp(\alpha'(y))\} \vdash \swp(\alpha)$. We can now use $\rnexistselim$ to obtain $X', \Pi' \vdash \swp(\alpha)$.
 	\item[$r$ is equality of encrypted terms:] $\pi$ looks as follows.
% 	\[
% 		\begin{prooftree}
% 			\[\pi_{0} \leadsto \justifies X, \Pi \vdash s(d, e) = s'(d, e)\]
% 			\[\pi_{1} \leadsto \justifies X, \Pi \vdash m = m'\]
% 			\justifies X, \Pi \vdash \{s(d, e)\}_{m} = \{s'(d, e)\}_{m'}
% 		\end{prooftree}
% 	\]
 	\[
 		\begin{prooftree}
 			\[\pi_{0} \leadsto \justifies X, \Pi \vdash s = s'\]
 			\[\pi_{1} \leadsto \justifies X, \Pi \vdash m = m'\]
 			\justifies X, \Pi \vdash \{s\}_{m} = \{s'\}_{m'}
 		\end{prooftree}
 	\]
 	Suppose $\{s\}_{m}$ is either $d$ or $e$. Then $m$ is either $p$ or $q$, and this would mean that $p = m'$ or $q = m'$ is derivable, contradicting safety of $X, \Pi$. Therefore $\{s\}_{m}$ is not equal to either $d$ or $e$. By induction hypothesis, $\swp(s = s')$ is derivable from $X', \Pi'$, and hence $\swp(\{s\}_{m} = \{s'\}_{m'})$ is also derivable. 
 	\item[$r$ is equality of decrypted terms:] In this case, $\pi$ is of the following form
 	\[
 		\begin{prooftree}
 			\[\pi_{0} \leadsto \justifies X, \Pi \vdash \{s\}_{m} = \{s'\}_{m'}\]
			\[\pi_{1} \leadsto \justifies X \DYderives \inv(m)\]
			\[\pi_{2} \leadsto \justifies X \DYderives \inv(m')\]
 			\justifies X, \Pi \vdash s = s'
 		\end{prooftree}
 	\]
	By induction hypothesis, it follows that $X', \Pi' \vdash \swp(\{s\}_{m}) = \swp(\{s'\}_{m'})$. Observe that neither $\{s\}_{m}$ nor $\{s'\}_{m'}$ is the same as $d$ or $e$ (for otherwise we would have that $X \DYderives p$ or $X \DYderives q$, which is an impossibility). Thus any occurrence of $d$ or $e$ in $\{s\}_{m}$ is inside $s$, and similarly for $\{s'\}_{m'}$. Thus $\swp(\{s\}_{m}) = \{\swp(s)\}_{m}$ and $\swp(\{s'\}_{m'}) = \{\swp(s')\}_{m'}$. Therefore $\swp(s) = \swp(s')$ is also derivable. ($\inv(m)$ and $\inv(m')$ are derivable from $X'$ since they are derivable from $X$ and do not mention $d$ or $e$.) 
	%
	% Suppose that $\{s\}_{m} = d = \{u\}_{p}$. The problem in this case $\swp(s = s')$ is the same $s = s'$ (which is just $u = s'$), since there is no occurrence of $d$ or $e$. But if we appeal to induction hypothesis, we get $X', \Pi' \vdash \{v\}_{q} = \{s'\}_{m'}$, from which we cannot derive $u = s'$. Luckily though, this case cannot arise at all, for we see from $\pi_{1}$ that $X \DYderives p$. This would imply that $X, \Pi \vdash p = p$, contradicting safety of $X, \Pi$.
	\end{description}
The rest of the cases are on similar lines (or simpler, appealing to the induction hypothesis).
\end{proof}

\section{Conclusion}
In this paper, we extended the model of~\cite{RSS14} by adding exisential assertions to the language, as a tool to hide private data used to generate certificates. These assertions are especially useful in coding up constructs that are common to voting protocols. We showed how to specify protocols in this model, and formalised the notion of anonymity in terms of indistinguishability. In a non-trivial example of analysis in our model, we proved anonymity for the FOO protocol. 

One way of extending this model is by adding a background theory of universally quantified sentences. Such a theory is a standard part of many authorization systems. For instance, if an agent $A$ communicates to $B$ the assertion $\exists{x}:\texttt{voted}(V,x)$ and if the background theory contains the assertion 
\[ \forall{X},{x}: \{ \texttt{voted}(X,x) \Rightarrow \texttt{elg}(X) \} \]
then $B$ can conclude $\texttt{elg}(V)$. More detailed examples are found in~\cite{BHM08,MPR13}. It is an important ingredient in many systems, and we can easily incorporate it in our theoretical model.

\end{document}